\documentclass[twocolumn,aps,prx,10pt,nofootinbib]{revtex4-2}

\usepackage[dvipsnames]{xcolor}
\usepackage[colorlinks=true,linkcolor=Blue,urlcolor=Blue,citecolor=Blue]{hyperref}

\usepackage{amsfonts,amsmath,amssymb,amsthm}
\usepackage{physics,mathtools,bbm}

\newtheorem{Theorem}{Theorem}
\newtheorem*{Theorem*}{Theorem}
\newtheorem{Corollary}{Corollary}

\begin{document}

\title{\texorpdfstring{Quantum codes are quantized codes:\\Stabilizer codes from quantization over finite fields}{Quantum codes are quantized codes: Stabilizer codes from quantization over finite fields}}
\author{Michael Zurel}
\affiliation{Institute for Quantum Computing, University of Waterloo, Waterloo, Ontario N2L 3G1, Canada}
\affiliation{Department of Applied Mathematics, University of Waterloo, Waterloo, Ontario N2L 3G1, Canada}
\date{\today}

\begin{abstract}
    We present a discrete analogue of geometric quantization over finite fields which, when applied to certain classical error-correcting codes, yields quantum error-correcting codes. This provides a physical interpretation for a previously known quantum-classical correspondence in coding theory.
\end{abstract}

\maketitle

The objective of a quantization procedure is to take a theory describing a classical physical system and produce a corresponding quantum system. Often, the classical system in question is a Hamiltonian system over a generalized position-momentum phase space. A quantization procedure then provides a mapping from the algebra of classical observables (functions on the phase space) to a corresponding algebra of quantum observables (operators on a Hilbert space).

Canonical quantization proceeds by mapping canonical position and momentum variables, $q_i$ and $p_j$, to operators, $\hat{q}_i$ and $\hat{p}_j$, satisfying the canonical commutation relations, $[\hat{q}_i,\hat{p}_j]=i\hbar\delta_{ij}$. Functions on phase space are then mapped to operators on Hilbert space by replacing the variables $q_i$ and $p_j$ by the corresponding operators. But since $\hat{q}_i$ and $\hat{p}_j$ generally do not commute, there is an ambiguity in the operator that gets assigned to a function coming from the ordering of these operators. Different choices, such as standard ordering ($\hat{q}$'s left of $\hat{p}$'s), normal ordering (creation operators left of annihilation operators), or symmetric orderings like in Weyl quantization~\cite{Weyl1927}, lead to different quantization procedures.

Geometric quantization~\cite{Souriau1966,Kostant1970} reframes quantization in geometric terms. Instead of starting with a preferred set of canonical coordinates, we start with the classical phase space itself, viewed as a symplectic manifold. The construction then proceeds in steps known as \emph{prequantization}, \emph{polarization}, and \emph{quantization}. This does not eliminate all choices, for example, there are different ways of choosing a polarization, but it organizes the ambiguities in a way that is tied to the geometry of the phase space, rather than to ad hoc choices of operator orderings.

One can ask what properties a reasonable quantization map should have. Some natural requirements are: (1)~real-valued functions map to self-adjoint operators, (2)~constant functions map to corresponding multiples of the identity operator, and (3)~the Hilbert space is not ``too big'' (for example, we could impose that the action of position and momentum operators be irreducible). Dirac proposed a further requirement~\cite{Dirac1958}, namely, that the Poisson bracket of the classical system be reproduced by the commutator of the quantum system via a correspondence of the form
\begin{equation}\label{Equation:PoissonBracketCommutator}
    i\hbar\mathcal{Q}(\{f,g\})=[\mathcal{Q}(f),\mathcal{Q}(g)],
\end{equation}
where $\mathcal{Q}(f)$ is the operator assigned to the function $f$. This rule has a number of motivations. First, taking the classical limit $\hbar\to0$ of the quantum system should reproduce the structure of the classical system, where multiplication of observables commutes. Second, the commutator is the structure on the Hilbert space that reproduces the algebraic properties of the Poisson bracket~\cite{Dirac1958}. This manifests in the fact that, where the Poisson bracket generates dynamics and symmetries in classical systems (e.g., via Hamilton's equations $\dot{f}=\{f,H\}$), the commutator plays the corresponding role in quantum systems (e.g., via Heisenberg's equation $\dot{F}=[F,H]/(i\hbar)$).

One might hope that a unique quantization map could be singled out by imposing enough well-motivated constraints, thus resolving the above mentioned ambiguity. However, this is not possible, since the Groenewold-van Hove theorem shows that there is typically no quantization map consistent with the above constraints~\cite{Groenewold1946,vanHove1951,Gotay1999}. Therefore, when constructing a quantization map, we must choose a subset of consistent constraints to satisfy.

Another common requirement for a quantization procedure is the capacity to handle constrained Hamiltonian systems. Many fundamental physical theories, including electromagnetism, Yang-Mills theory, and general relativity, are of this type. In Dirac's classification~\cite{Dirac1964}, so-called ``first-class'' constraints generate gauge symmetries, with physical degrees of freedom living in the symplectic reduction of the phase space~\cite{MarsdenWeinstein1974}, i.e., the space of gauge orbits in the constraint surface. A desirable property of a quantization procedure is that it be compatible with this structure, in the sense that quantizing the reduced phase space is equivalent to quantizing first and then imposing the corresponding quantum constraints. Geometric quantization often has this property~\cite{Gotay1986,AshtekarStillerman1986,GuilleminSternberg1982,TianZhang1998}.

Coding theory, both classical and quantum, exhibits a similar structure. A classical linear code is obtained by restricting to a subspace of a vector space over a finite field. This can be done by imposing a number of linear constraints (called parity checks)~\cite{MacWilliamsSloane1977}. Similarly, the code space of a quantum stabilizer code is obtained by restricting to the joint eigenspace of a set of commuting Pauli operators by imposing a set of stabilizer constraints~\cite{Gottesman1997}.

In this paper, we introduce a partial analogue of geometric quantization for symplectic vector spaces over finite fields. The construction has a similar structure to geometric quantization of finite-dimensional symplectic vector spaces over the reals, but with certain analytic concepts replaced by suitable finite-field theoretic analogues. When applied to a system with a set of linear first-class constraints, the result is the code space of a stabilizer code. The stabilizers and the logical code space are obtained by quantizing the constraints and the reduced phase space respectively. Additionally, we note that the classical constraints can be identified with the parity checks of a classical code, hence the tagline: \emph{quantum codes are quantized codes.} Finally, we find that a discrete analogue of the Segal-Bargmann representation~\cite{Bargmann1961,Segal1963} reproduces a well-known correspondence between stabilizer codes on $d$-dimensional qudits and certain classical codes over $\mathbb{F}_{d^2}$~\cite{CalderbankSloane1998,Rains1999,KetkarSarvepalli2006}.

\section*{Geometric quantization of symplectic vector spaces}

Geometric quantization is a general framework for mapping classical systems to quantum systems~\cite{Souriau1966,Kostant1970}. It starts with a symplectic manifold $(M,\omega)$, and proceeds in steps. First, \emph{prequantization} constructs a faithful representation of the Poisson algebra of classical observables as operators on a ``prequantum'' Hilbert space satisfying Eq.~\eqref{Equation:PoissonBracketCommutator}. Formally, \emph{prequantization} constructs a complex line bundle over $M$ whose curvature encodes the symplectic form, and takes ``wavefunctions'' to be square-integrable sections of this bundle. A classical observable $f$ is mapped to a differential operator constructed from the Hamiltonian vector field of $f$.

The prequantum Hilbert space is too large to represent the physical quantum system. The second step, \emph{polarization}, is a mechanism for selecting ``half the variables'' to define the configuration space, analogous to deciding that wavefunctions should depend on $q_j$ but not on $p_j$, or vice versa. Restricting to polarized sections yields a smaller Hilbert space and a corresponding class of observables. The resulting \emph{quantization} map takes classical observables to operators on the polarized Hilbert space.

One of the strengths of geometric quantization is that it can be applied to a very broad class of systems represented by symplectic manifolds, not just the typical cases like cotangent bundles of real manifolds. That said, for our purposes, we need only the basics of geometric quantization for the special case of symplectic vector spaces over $\mathbb{R}$. For a more thorough and pedagogical overview of geometric quantization see Refs.~\cite{Blau1992,Hall2013}. The notation used in this section is consistent with that of Ref.~\cite{Hall2013}.

\medskip

Let $(V,\omega)$ be a $2n$-dimensional symplectic vector space over $\mathbb{R}$, with Darboux coordinates $(q_1,\dots,q_n,p_1,\dots,p_n)$ and canonical $2$-form $\omega=\sum_jdp_j\wedge dq_j$. From the symplectic structure, we get the Poisson bracket,
\begin{equation}
    \{f,g\}=-\omega^{-1}(df,dg)=\sum_j\frac{\partial f}{\partial q_j}\frac{\partial g}{\partial p_j}-\frac{\partial f}{\partial p_j}\frac{\partial g}{\partial q_j}.
\end{equation}
In this case, $\omega$ is globally exact and the prequantum line bundle is trivial, so prequantization can be described directly in terms of complex-valued functions on $V$~\cite{Hall2013}.

\paragraph*{Prequantization.}

The prequantization map takes classical observables in $C^\infty(V)$ to operators on the prequantum Hilbert space $L^2(\mathbb{R}^{2n})$. To define this map, first choose a symplectic potential $\theta$, i.e., a $1$-form satisfying $d\theta=\omega$. Different choices of symplectic potential yield prequantization maps that are unitarily equivalent~\cite[Proposition~22.5]{Hall2013}. For concreteness, we can choose
\begin{equation}\label{Equation:SymplecticPotential}
    \theta=\sum\nolimits_jp_jdq_j.
\end{equation}
This defines a connection $\nabla=d-i\theta/\hbar$ on the trivial line bundle $V\times\mathbb{C}$ whose curvature is $-i\omega/\hbar$. For any $f\in C^\infty(V)$, let $X_f$ be its Hamiltonian vector field, defined by  $df=\omega(X_f,\cdot)$, which satisfies $X_f(g)=\{f,g\}=-\omega(X_f,X_g)$. Then the prequantization map is
\begin{equation}
    \mathcal{Q}_{\mathrm{pre}}(f)=i\hbar\nabla_{X_f}+f,\quad\forall f\in C^\infty(V),
\end{equation}
where $\nabla_{X_f}=X_f-i\theta(X_f)/\hbar$ is the covariant derivative in the direction of $X_f$.

This satisfies many of the properties desired for a quantization map: $\mathcal{Q}_{\mathrm{pre}}(1)=\mathbbm{1}$, Eq.~\eqref{Equation:PoissonBracketCommutator} holds for all $f,g\in C^\infty(V)$, and, subject to some qualifications, smooth real-valued functions are mapped to self-adjoint operators~\cite{Hall2013}. In particular,
\begin{equation}\label{Equation:PrequantumPositionMomentumOperators}
    \begin{cases}
        \mathcal{Q}_{\mathrm{pre}}(q_j)=i\hbar\frac{\partial}{\partial p_j}+q_j,\\
        \mathcal{Q}_{\mathrm{pre}}(p_j)=-i\hbar\frac{\partial}{\partial q_j}.
    \end{cases}
\end{equation}
The exponentiated versions of these operators are finite displacement operators, which act on wavefunctions as
\begin{equation}
    \begin{cases}
        \left(e^{it\mathcal{Q}_{\mathrm{pre}}(q_j)/\hbar}\Psi\right)(q,p)=e^{itq_j/\hbar}\Psi(q,p-te_j),\\
        \left(e^{it\mathcal{Q}_{\mathrm{pre}}(p_j)/\hbar}\Psi\right)(q,p)=\Psi(q+te_j,p).
    \end{cases}
\end{equation}

Any observable that is linear in the variables $q_i,p_j$ can be written uniquely as $f_v(\cdot):=\omega(v,\cdot)$ for some $v\in\mathbb{R}^{2n}$. These functions satisfy $X_{f_v}=v$, and $\{f_u,f_v\}=-\omega(u,v)$. Then, for each $v:=(q',p')\in\mathbb{R}^{2n}$, we can define a prequantum displacement operator associated to the linear observable $f_v$ as
\begin{equation}
    D_v=e^{i\mathcal{Q}_{\mathrm{pre}}(f_v)/\hbar}.
\end{equation}
Using the Zassenhaus formula, these operators act on prequantum wavefunctions as
\begin{equation}
    (D_{(q',p')}\Psi)(q,p)=e^{i(p'\cdot q-\frac{1}{2}q'\cdot p')/\hbar} \Psi(q-q',p-p')
\end{equation}
and they satisfy the Weyl relations
\begin{equation}
    \begin{cases}
        D_uD_v=e^{i\omega(u,v)/(2\hbar)}D_{u+v}\\
        D_{-v}=D_v^{-1}
    \end{cases}\quad\forall u,v\in V.
\end{equation}
That is, the operators $\{e^{it/\hbar}D_v\;|\;v\in V,t\in\mathbb{R}\}$ define a representation of the Heisenberg group $H(V)$, with composition law $(u,s)\circ(v,t)=(u+v,s+t+\omega(u,v)/2)$ for all $u,v\in V$, $s,t\in\mathbb{R}$.

However, the prequantum Hilbert space is, in a sense, too large to represent the physical quantum Hilbert space~\cite{Blau1992,Hall2013}. In particular, the representation of the Heisenberg group is reducible, a situation remedied by choosing a polarization.

\paragraph*{Polarization.}
There are different ways of choosing a polarization. The most common \emph{real} polarizations are the position and momentum polarizations. The position polarization is given by $\text{span}\{\partial/\partial p_j\}_j$, and the corresponding space of polarized wavefunctions consists of functions $\Psi\in C^\infty(V)$ satisfying $\nabla_{\partial/\partial p_j}\Psi=0$ for all $j$. With the choice of symplectic potential above, this means $\partial\Psi/\partial p_j=0$, so the functions $\Psi(q,p)=\psi(q)$ depend only on the position variables. Similarly, the momentum polarization is $\text{span}\{\partial/\partial q_j\}_j$, and the corresponding space of polarized wavefunctions consists of functions satisfying $\nabla_{\partial/\partial q_j}\Psi=0$. These are functions of the form $\Psi(q,p)=e^{iq\cdot p/\hbar}\phi(p)$ which, up to the exponential gauge factor, depend only on the momentum variables.

We get complex polarizations by identifying $\mathbb{R}^{2n}$ with $\mathbb{C}^n$ via
\begin{equation}\label{Equation:SegalBargmannCoordinates}
    z_j:=q_j-i\alpha p_j\quad\forall j
\end{equation}
for some $\alpha>0$. Then $\text{span}_{\mathbb{C}}\{\partial/\partial\bar{z}_j\}_j$ defines a polarization, where
\begin{equation}
    \frac{\partial}{\partial\bar{z}_j}=\frac{1}{2}\left(\frac{\partial}{\partial q_j}-\frac{i}{\alpha}\frac{\partial}{\partial p_j}\right)
\end{equation}
is the Wirtinger derivative with respect to the complex structure on $V$ defined in Eq.~\eqref{Equation:SegalBargmannCoordinates}. The corresponding space of polarized wavefunctions consists of functions satisfying $\nabla_{\partial/\partial\bar{z}_j}\Psi=0$ for all $j$. These functions have the form $\Psi(q,p)=F(z)e^{-\alpha|p|^2/(2\hbar)}$ where $F$ is an arbitrary holomorphic function on $\mathbb{C}^n$. This is the Segal-Bargmann representation~\cite{Bargmann1961,Segal1963}.

\paragraph*{Quantization.}

One issue with the above construction is that there are no nonzero square-integrable functions in $L^2(\mathbb{R}^{2n})$ satisfying the constraints of the position or momentum polarizations. In order to get the quantum Hilbert space, we need to equip the spaces of polarized wavefunctions with a suitably modified inner product. Formally, this is done by introducing so-called ``half-forms'' or ``metaplectic corrections'' (see Ref.~\cite{Hall2013} for details), but for our purposes, this is equivalent to simply equipping each subspace of polarized wavefunctions with the natural inner product on the corresponding space of independent variables.

For example, in the position polarization, the polarized wavefunctions of the form $\Psi(q,p)=\psi(q)$ can be identified with the Hilbert space $L^2(\mathbb{R}^n)$, with inner product given by integrating over only the position variables,
\begin{equation}
    \langle\Psi_1|\Psi_2\rangle=\int_{\mathbb{R}^n} dq\,\psi_1(q)^*\psi_2(q).
\end{equation}
The prequantum operators associated to the linear observables $q_j$ and $p_j$ preserve this subspace, and their induced action is the usual Schr\"odinger representation with $\hat{q}_j\psi=q_j\psi$ and $\hat{p}_j\psi=-i\hbar\partial\psi/\partial q_j$.

Similarly, the momentum polarization yields Hilbert space $L^2(\mathbb{R}^n)$, with inner product obtained by integrating over only the momentum variables, and with $\hat{p}_j$ acting by multiplication and $\hat{q}_j$ by $i\hbar\partial/\partial p_j$. The holomorphic polarization described above gives the Segal-Bargmann space of holomorphic functions on $\mathbb{C}^n$ equipped with its Gaussian-weighted inner product~\cite{Bargmann1961,Segal1963}.

Only the observables whose prequantum operators preserve the space of polarized wavefunctions can be restricted directly to the polarized Hilbert space. In particular, in each of the position, momentum, and Segal-Bargmann representations, the quantization of observables that are linear in the position and momentum variables generate an irreducible representation of the Heisenberg group with the same central character, and so by the Stone-von Neumann theorem~\cite{Stone1930,vonNeumann1931,Stone1931,vonNeumann1932}, they are unitarily equivalent.

\paragraph*{Gauge dependence.}

The choice of symplectic potential in Eq.~\eqref{Equation:SymplecticPotential} leads to the asymmetric-looking position and momentum operators in Eq.~\eqref{Equation:PrequantumPositionMomentumOperators}. If instead we use the more symmetric symplectic potential $\theta=\sum_j(p_jdq_j-q_jdp_j)/2$, then $q_j$ and $p_j$ prequantize to
\begin{equation}
    \begin{cases}
        \mathcal{Q}_{\mathrm{pre}}(q_j)=i\hbar\frac{\partial}{\partial p_j}+\frac{1}{2}q_j,\\
        \mathcal{Q}_{\mathrm{pre}}(p_j)=-i\hbar\frac{\partial}{\partial q_j}+\frac{1}{2}p_j.
    \end{cases}
\end{equation}
The resulting prequantum displacement operators $D_v$ for each $v\in V$ act on wavefunctions $\Psi(u)$ as
\begin{equation}\label{Equation:RealSymmetricDisplacementOperator}
    (D_v\Psi)(u)=e^{i\omega(v,u)/(2\hbar)}\Psi(u-v),
\end{equation}
and these also generate a representation of the Heisenberg group $H(V)$. This representation is related to the previous one by a unitary gauge transformation, and up to some gauge factors, the resulting spaces of polarized wavefunctions are the same. 

In particular, in the symmetric gauge, the polarized wavefunctions of the holomorphic polarization become $\Psi(q,p)=e^{-(iq\cdot p+\alpha|p|^2)/(2\hbar)}F(z)$ where $F$ is an arbitrary holomorphic function on $\mathbb{C}^n$. Equivalently, these wavefunctions can be written as
\begin{equation}\label{Equation:HolomorphicWavefunctionsSymmetricGauge}
    \Psi(z)=e^{-(z\cdot\bar{z}-z\cdot z)/(4\alpha\hbar)}F(z),
\end{equation}
and since the factor $e^{z\cdot z/(4\alpha\hbar)}$ is a non-zero holomorphic function of $z$, it can be absorbed into $F$, giving the usual Gaussian-weighted wavefunctions of the Segal-Bargmann representation $\Psi(z)=e^{-|z|^2/(4\alpha\hbar)}F(z)$.

\section*{Quantization over finite fields}

In this section, we introduce a partial analogue of geometric quantization for symplectic vector spaces over finite fields. In this setting, there is no direct analogue of differential operators, Hamiltonian vector fields, covariant derivatives, etc., and we do not construct a full quantization map. Rather, we construct representations of the discrete Heisenberg group, which have a purely algebraic description, through a procedure that has a similar structure to geometric quantization of symplectic vector spaces over the reals.

\medskip

Throughout this section, let $(V,\omega)$ be a $2n$-dimensional symplectic vector space over $\mathbb{F}:=\mathbb{F}_d$, the finite field of odd-prime order $d$, with symplectic form
\begin{equation}\label{Equation:FiniteSymplecticForm}
    \omega((q,p),(q',p')):=p\cdot q'-q\cdot p',\;\forall (q,p),(q',p')\in V.
\end{equation}
Let $\zeta=e^{2\pi i/d}$ be a primitive $d^{th}$ root of unity, and let $\chi:\mathbb{F}\to\mathbb{C}^\times$ be the character $\chi(t)=\zeta^t$.

\paragraph*{Prequantization.}
The prequantum Hilbert space is the space of complex-valued functions on the phase space, $L^2(V)\cong\mathbb{C}^{d^{2n}}$. For each $v\in V$, we can define a prequantum displacement operator $D_v$ by its action on prequantum wavefunctions $\Psi\in L^2(V)$ as
\begin{equation}
    (D_v\Psi)(u)=\chi\left(\omega(v,u)2^{-1}\right)\Psi(u-v)
\end{equation}
(cf.~Eq.~\ref{Equation:RealSymmetricDisplacementOperator}) where $2^{-1}$ is the multiplicative inverse of $2$ in $\mathbb{F}$. This is the prequantum operator associated to the observable $\chi\circ f_v$ where $f_v(\cdot):=\omega(v,\cdot)$. 
A quick calculation shows that these displacement operators satisfy the discrete Weyl relations
\begin{equation}
    \begin{cases}
        D_uD_v=\chi\left(\omega(u,v)2^{-1}\right)D_{u+v}\\
        D_{-v}=D_v^{-1}
    \end{cases}\quad\forall u,v\in V,
\end{equation}
and the operators $\{\chi(t)D_v\;|\; t\in\mathbb{F},\,v\in V\}$ define a representation of the discrete Heisenberg group $H(V)$. This representation is faithful, but it is also reducible.

\paragraph*{Polarization.}
In the real case, polarized Hilbert spaces are obtained by restricting to functions that are covariantly constant along some directions. In the finite field setting, there is no direct equivalent of these differential constraints, the closest analogue uses finite parallel transport. To this end, we introduce parallel transport operators $T_v$ defined for each $v\in V$ by their action on prequantum wavefunctions as
\begin{equation}
    (T_v\Psi)(u)=\chi(\omega(v,u)2^{-1})\Psi(u+v).
\end{equation}
These operators are chosen so that they commute with the prequantum displacement operators and they compose as
\begin{equation}\label{Equation:TransportOperatorsComposition}
    T_uT_v=\chi(-\omega(u,v)\;2^{-1})T_{u+v}\quad\forall u,v\in V.
\end{equation}

Mirroring the real case, we take the position polarization to be $\{(0,p')\;|\;p'\in\mathbb{F}^n\}$, and the corresponding subspace of polarized wavefunctions to be those that are covariantly constant along the $p_j$ directions, by which we mean functions satisfying $(T_{(0,p')}\Psi)(q,p)=\Psi(q,p)$ for all $p'\in\mathbb{F}^n$. These constraints are consistent since the polarization is a Lagrangian subspace of $V$, and so with Eq.~\eqref{Equation:TransportOperatorsComposition} the corresponding parallel transport operators commute. The resulting space of polarized wavefunctions consists of functions of the form
\begin{equation}
    \Psi(q,p)=\chi(-q\cdot p\;2^{-1})\psi(q),
\end{equation}
where $\psi:\mathbb{F}^n\to\mathbb{C}$ is arbitrary. That is, up to the gauge factor, the functions depend only on the position variables $q$. The prequantum displacement operators act on these functions like
\begin{equation}
    (D_{(q',p')}\Psi)(q,p)=\chi(-q\cdot p\;2^{-1})\chi(p'\cdot q-q'\cdot p'\;2^{-1})\psi(q-q'),
\end{equation}
and so the induced action on the functions $\psi(q)$ is
\begin{equation}\label{Equation:InducedDisplacementOperators}
    \psi(q)\xmapsto{D_{(q',p')}}\chi(p'\cdot q-q'\cdot p'\;2^{-1})\psi(q-q').
\end{equation}
This induced action exactly reproduces the displacement operators used in Gross' construction of the discrete Wigner function and discrete Weyl quantization~\cite{Gross2006}.

Similarly, the momentum polarization is $\{(q',0)\;|\;q'\in\mathbb{F}^n\}$, with polarized wavefunctions satisfying $(T_{(q',0)}\Psi)(q,p)=\Psi(q,p)$ for all $q'\in\mathbb{F}^n$. These are functions of the form $\Psi(q,p)=\chi(q\cdot p\;2^{-1})\phi(p)$.

To get an analogue of the Segal-Bargmann representation, let $\mathbb{E}=\mathbb{F}_{d^2}$ be the finite field of order $d^2$. Here the Frobenius automorphism, defined by $x\mapsto\bar{x}:=x^d$ for all $x\in\mathbb{E}$, plays the role of complex conjugation. Fix a quadratic extension $\mathbb{E}/\mathbb{F}$, and a field basis $\{1,a\}$ of $\mathbb{E}$ over $\mathbb{F}$ such that $\bar{a}=-a$. For any fixed $\alpha\in\mathbb{F}^\times$, we identify $V=\mathbb{F}^{2n}$ with $\mathbb{E}^n$ via the bijection
\begin{equation}\label{Equation:FiniteComplexification}
    \iota_\alpha:\mathbb{F}^{2n}\to\mathbb{E}^n::(q,p)\mapsto q+a\alpha p=:z.
\end{equation}
Note that in Eq.~\eqref{Equation:SegalBargmannCoordinates}, the minus sign was necessary in order to get square-integrable wavefunctions~\cite{Hall2013}, whereas here, the corresponding minus sign is not necessary and simply amounts to a different choice of $\alpha$.

Now let $\mu=q+\alpha p$ and $\nu=q-\alpha p$. Inverting these expressions we have $q=(\mu+\nu)2^{-1}$ and $p=(\mu-\nu)(2\alpha)^{-1}$, or equivalently, $(q,p)=(\mu,\alpha^{-1}\mu)2^{-1}+(\nu,-\alpha^{-1}\nu)2^{-1}$. This gives two Lagrangian subspaces of $V$,
\begin{equation}
    \begin{cases}
        L_z:=\{(w,\alpha^{-1}w)\;|\;w\in\mathbb{F}^n\}\\
        L_{\bar z}:=\{(w,-\alpha^{-1}w)\;|\;w\in\mathbb{F}^n\},
    \end{cases}
\end{equation}
that satisfy $V=L_z\oplus L_{\bar z}$. Under $\iota_\alpha$, these map to
\begin{equation}
    \begin{cases}
        U=\iota_\alpha(L_z)=(1+a)\mathbb{F}^n\\
        \bar{U}=\iota_\alpha(L_{\bar{z}})=(1-a)\mathbb{F}^n,
    \end{cases}
\end{equation}
with $\mathbb{E}^n=U\oplus_{\mathbb{F}}\bar{U}$.

Under $\iota_\alpha$, the coordinate translation $(q,p)\mapsto(q+w,p-\alpha^{-1}w)$ becomes $z\mapsto z+(1-a)w$, i.e., addition of an element of $\bar{U}$. Therefore, with polarization $L_{\bar{z}}$, the subspace of polarized wavefunctions is obtained by imposing covariant constancy in the $L_{\bar{z}}$ directions,
\begin{equation}
    (T_{(w,-\alpha^{-1}w)}\Psi)(q,p)=\Psi(q,p),
    \qquad \forall w\in\mathbb{F}^n.
\end{equation}
The general solution is
\begin{equation}\label{Equation:DiscreteHolomorphicWavefunctionMunu}
    \Psi(q,p)=\chi\left(\frac{\mu\cdot\nu}{4\alpha}\right)F(\mu),
\end{equation}
where $F:\mathbb{F}^n\to\mathbb{C}$ is arbitrary. Thus, up to the gauge factor $\chi(\mu\cdot\nu/4\alpha)$, polarized wavefunctions depend only on the $L_z$ coordinates.

With the identification of $V$ with $\mathbb{E}^n$ via $\iota_\alpha$, polarized wavefunctions have the form
\begin{equation}
    \Psi(z)=\chi\left(\frac{1}{16\alpha a^2}\Tr((a^2-1)z\cdot z+(a^2+1)\bar{z}\cdot z)\right)F([z])
\end{equation}
where $F:\mathbb{E}^n/\bar{U}\to\mathbb{C}$ is arbitrary, and the field trace $\Tr:\mathbb{E}\to\mathbb{F}$ is $\Tr(x):=x+\bar{x}$.

As in the real case, we can absorb a factor into $F$ as long as it is nonzero everywhere and constant in the $\bar{U}$ directions. Subtracting $\mu\cdot\mu/(4\alpha)$ from the argument of the prefactor in Eq.~\eqref{Equation:DiscreteHolomorphicWavefunctionMunu}, the polarized wavefunctions can be written as
\begin{equation}\label{Equation:DiscreteHolomorphicWavefunction}
    \Psi(z) = \chi\left(\frac{1}{8\alpha a^2}\Tr\left[(1+a)(\bar{z}\cdot z-z\cdot z)\right]\right)F([z]).
\end{equation}
If $d\equiv3\pmod{4}$, then we can choose $a$ such that $a^2=-1$, and then Eq.~\eqref{Equation:DiscreteHolomorphicWavefunction} simplifies to
\begin{equation}
    \Psi(z)=\chi\left(-\frac{1}{8\alpha}\Tr[(1+a)(\bar{z}\cdot z-z\cdot z)]\right)F([z])
\end{equation}
(cf.~Eq.~\eqref{Equation:HolomorphicWavefunctionsSymmetricGauge}).

\paragraph*{Quantization.}
For a polarization $L\subset V$ (i.e., a Lagrangian subspace), the space of polarized wavefunctions is
\begin{equation}
    \mathcal{H}_L=\{\Psi\in L^2(V)\;|\;T_v\Psi=\Psi\;\forall v\in L\}.
\end{equation}
For any such function,
\begin{equation}
    \Psi(u+v)=\chi\left(-\omega(v,u)2^{-1}\right)\Psi(u)\quad\forall v\in L,
\end{equation}
and hence, one value of $\Psi$ on each $L$-coset determines the full polarized wavefunction. Choosing coset representatives, we can identify $\mathcal{H}_L$ with the space of complex-valued functions on $V/L$, and so $\dim\mathcal{H}_L=|V/L|=d^n$.

To get a quantum Hilbert space, we need to equip the space of polarized wavefunctions with a suitable Hermitian inner product. In the discrete setting we do not need half-form corrections to get square-integrable wavefunctions, for any of the polarizations above, the space of polarized wavefunctions is already a non-trivial subspace of $L^2(V)$. We equip this space with the inner product
\begin{equation}
    \langle\Psi,\Phi\rangle_L=\frac{1}{d^n}\sum\limits_{v\in V}\overline{\Psi(v)}\Phi(v).
\end{equation}
Since $\overline{\Psi(v)}\Phi(v)$ is constant on each $L$-coset of $V$, this is equal to the usual inner product on $L^2(V/L)\cong\mathbb{C}^{d^n}$.

The displacement operators $D_v$ commute with all of the operators $T_v$, so they preserve the space of polarized wavefunctions. Moreover, their restrictions to this space, as in Eq.~\eqref{Equation:InducedDisplacementOperators}, give irreducible representations of the Heisenberg group $H(V)$, and so by the discrete Stone-von Neumann theorem~\cite{Neuhauser2002,Gross2006}, representations obtained from different polarizations are unitarily equivalent. Therefore, for any choice of polarization, quantization produces a $d^n$-dimensional Hilbert space with an irreducible representation of the Heisenberg group, i.e., standard finite-dimensional quantum theory.

\section*{Quantum codes are quantized codes}\label{Section:CodingTheory}

In this section, we briefly review classical and quantum codes, and we relate the two using the quantization procedure from the previous section. See, for example, Ref.~\cite{MacWilliamsSloane1977} for more background on classical coding theory, and Refs.~\cite{Gottesman1997,Gottesman2026} for more on quantum coding theory.

\medskip

\paragraph*{Classical codes.}

An $[N,k]_q$ linear code is a $k$-dimensional subspace of $\mathbb{F}_q^N$, where $q$ is a prime power. With respect to the Euclidean inner product on $\mathbb{F}_q^N$,
\begin{equation}
    \langle u,v\rangle_E=\sum\nolimits_{j}u_jv_j,
\end{equation}
the Euclidean dual code of a code $C$ (often simply called the dual code) is the $[N,N-k]_q$ code
\begin{equation}
    C^{\perp_E}:=\left\{u\in\mathbb{F}_q^N\;\Big|\;\langle u,v\rangle_E=0\;\forall v\in C\right\}.
\end{equation}
A code is called Euclidean self-orthogonal if $C\subseteq C^{\perp_E}$, and it is Euclidean self-dual if $C=C^{\perp_E}$. When $q=r^2$, we can also define a Hermitian inner product on $\mathbb{F}_q^N$,
\begin{equation}\label{Equation:HermitianProduct}
    \langle u,v\rangle_H=\sum\nolimits_{j}\bar{u}_jv_j,
\end{equation}
where $\bar{x}:=x^r$ is the Frobenius automorphism. The Hermitian dual of a code, and Hermitian self-orthogonality and Hermitian self-duality are defined similarly to the Euclidean case, and in fact, dual codes, self-orthogonality, and self-duality can be defined similarly with respect to any inner product on $\mathbb{F}_q^N$.

Parity checks are elements of the dual code, and so for any parity check $u\in C^\perp$ and any code word $v\in C$, we have $\langle u,v\rangle=0$. A set of parity checks $S\subset C^\perp$ is complete if it spans $C^\perp$, and so the code is uniquely specified by the corresponding orthogonality constraints
\begin{equation}
    C=\left\{v\in\mathbb{F}_q^N\;\Big|\;\langle u,v\rangle=0\;\forall u\in S\right\}.
\end{equation}

Additive codes are slightly more general than linear codes. An additive code over $\mathbb{F}_q$ is a subset of $\mathbb{F}_q^N$ that is closed under addition, but not necessarily $\mathbb{F}_q$-linear.

\paragraph*{Quantum codes.}

Let $d$ be an odd prime. On $n$ qudits each with Hilbert space dimension $d$, the overall Hilbert space is $(\mathbb{C}^d)^{\otimes n}\cong\mathbb{C}^{d^n}$. The Pauli group $\mathcal{P}$ is the group generated by the generalized Pauli $Z$ and $X$ operators acting on each of the $n$ qudits, where the single-qudit Pauli operators are
\begin{equation}
    Z=\sum\limits_{j\in\mathbb{F}_d}\zeta^j\ket{j}\bra{j}\quad\text{and}\quad X=\sum\limits_{j\in\mathbb{F}_d}\ket{j+1}\bra{j}.
\end{equation}
A quantum $[\![n,k]\!]_d$ stabilizer code is identified with a stabilizer group, that is, an Abelian subgroup of the Pauli group not containing $\zeta\mathbbm{1}$~\cite{Gottesman1997}. The code space is the subspace of the Hilbert space fixed by the operators in the stabilizer group. For an $[\![n,k]\!]_d$ code, the stabilizer group has order $d^{n-k}$ and the code space has dimension $d^k$.

For a code with stabilizer group $\mathcal{S}$, logical Pauli operators are elements of the normalizer of $\mathcal{S}$ in $\mathcal{P}$,
\begin{equation}
    \mathcal{N}(\mathcal{S})=\{P\in\mathcal{P}\;|\;PS=SP\;\forall S\in\mathcal{S}\}.
\end{equation}
Since, $\mathcal{S}$ is Abelian, $\mathcal{S}\subset\mathcal{N}(\mathcal{S})$, and logical Pauli operators are equivalent up to multiplication by elements of $\mathcal{S}$. Therefore, up to this equivalence, logical operators can be identified with elements of $\mathcal{N}(\mathcal{S})/\mathcal{S}$. This quotient is isomorphic to the $k$-qudit Pauli group.

\paragraph*{Quantum codes are quantized codes.}

For the remainder of this section, as in the previous section, let $\mathbb{F}=\mathbb{F}_d$ and $\mathbb{E}=\mathbb{F}_{d^2}$, where $d$ is an odd prime, and let $(V,\omega)$ be a $2n$-dimensional symplectic vector space over $\mathbb{F}$. Let $\mathcal{H}$ be a quantization of $V$ (any polarization), with irreducible Heisenberg group representation $\rho(v,t)=\chi(t)D_v$.

In the position polarization, the induced displacement operators in Eq.~\eqref{Equation:InducedDisplacementOperators} (which we also denote by $D_v$) are
\begin{equation}\label{Equation:PauliOperators}
    D_{(q',p')}=\chi\left(-\frac{q'\cdot p'}{2}\right)\bigotimes\limits_{k=1}^nZ^{p'_k}X^{q'_k},
\end{equation}
and so, as is well-known, the $n$-qudit Pauli group on odd-prime-dimensional qudits is isomorphic to the Heisenberg group $H(V)$, and its elements modulo phase are labeled by $V$~\cite{Gross2006}.

The symplectic form on $V$ computes the commutation relations of the Pauli operators via
\begin{equation}
    D_uD_v=\chi(\omega(u,v))D_vD_u,
\end{equation}
and so a stabilizer group can be specified by a pair $(I,r)$ where $I\subset\mathbb{F}^{2n}$ is an isotropic subspace and $r:I\to\mathbb{F}$ is a linear function. An $n-k$-dimensional isotropic subspace $I$ corresponds to an $[\![n,k]\!]_d$ stabilizer code with stabilizer group is $\mathcal{S}=\{\chi(-r(v))D_v\;|\; v\in I\}$.

We can define a bracket on the space $V^*\oplus\mathbb{F}$ of affine-linear observables by setting $\{f_u,f_v\}=-\omega(u,v)$ for linear observables $f_v=\omega(v,\cdot)$, and $\{c,f\}=0$ for every $c\in\mathbb{F}$ and $f\in V^*\oplus\mathbb{F}$. This mirrors the Poisson bracket for affine-linear observables in the real case. Then, for an $n-k$-dimensional isotropic subspace $I\subset V$, equations of the form $f_v=0$ for $v\in I$ can be thought of as first-class constraints. Their constraint surface is
\begin{equation}
    \{u\in V\;|\;f_v(u)=0\;\forall v\in I\}=I^\perp,
\end{equation}
i.e., the symplectic complement of $I$. Equivalently, $I^\perp$ is a classical $[2n,n+k]_d$ linear code with symplectic parity checks $\{u\;|\;u\in I\}$.

The corresponding gauge transformations are translations by elements of $I$, and so the symplectic reduction~\cite{MarsdenWeinstein1974} of the phase space is
\begin{equation}
    \tilde{V}:=I^\perp/I\cong\mathbb{F}^{2k}.
\end{equation}
The symplectic form $\omega$ restricted to $I^\perp$ is invariant under translations by elements of $I$ and has radical $I$, so it induces a non-degenerate symplectic form on $I^\perp/I$, given by $\tilde{\omega}([u],[v])=\omega(u,v)$ for all $u,v\in I^\perp$.

The constraint $f_u(v)=0$ is equivalent to $\chi(f_u(v))=1$, and under quantization, the mapping $\chi\circ f_u\longmapsto D_u$ gives the quantum constraints $D_u\Psi=\Psi$. The collection of these quantum constraints for all $v\in I$ defines a stabilizer group, $\mathcal{S}_I=\{D_u\;|\;u\in I\}$ with code space $\mathcal{C}_I=\{\Psi\in\mathcal{H}\;|\;D_u\Psi=\Psi\;\forall u\in I\}$.

For any $n-k$-dimensional isotropic subspace $I$, let $\tilde{\mathcal{H}}$ be any quantization of the reduced phase space $I^\perp/I$ with central character $\chi$ and Heisenberg group representation $\tilde{\rho}([v],t)=\chi(t)\tilde{D}_{[v]}$. The following result shows that symplectic reduction is compatible with quantization.

\begin{Theorem}\label{Theorem:MainResult}
    A displacement operator $D_v$ preserves the code space $\mathcal{C}_I$ if and only if $v\in I^\perp$, and there exists a unitary isomorphism $U_I:\tilde{\mathcal{H}}\to\mathcal{C}_I$ such that
    \begin{equation}\label{Equation:MainResultIntertwiner}
        U_I\tilde{D}_{[v]}U_I^\dagger=D_v|_{\mathcal{C}_I}\quad\forall v\in I^\perp.
    \end{equation}
\end{Theorem}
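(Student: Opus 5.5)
The plan has three parts: show that exactly the displacements in $I^\perp$ preserve $\mathcal{C}_I$, show that these restrictions (modulo $I$) form an irreducible Heisenberg representation of $\tilde V$, and then invoke the discrete Stone-von Neumann theorem.

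\textbf{Which displacements preserve $\mathcal{C}_I$.} This follows from the commutation relation $D_uD_v=\chi(\omega(u,v))D_vD_u$. If $v\in I^\perp$, then $D_v$ commutes with every stabilizer $D_u$, $u\in I$. Hence for $\Psi\in\mathcal{C}_I$ we get $D_uD_v\Psi=D_vD_u\Psi=D_v\Psi$, so $D_v\Psi\in\mathcal{C}_I$.

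Conversely, suppose $v\notin I^\perp$, and pick $u\in I$ with $\omega(u,v)\neq0$. For any nonzero $\Psi\in\mathcal{C}_I$,
\[
D_uD_v\Psi=\chi(\omega(u,v))D_vD_u\Psi=\chi(\omega(u,v))D_v\Psi.
\]
Since $\chi(\omega(u,v))\neq1$ and $D_v$ is invertible, $D_v\Psi$ lies in a different eigenspace of $D_u$ and is not in $\mathcal{C}_I$. For this argument I need $\mathcal{C}_I\neq0$. I will get it from the standard projector
\[
P_I=d^{-(n-k)}\sum_{u\in I}D_u .
\]
On an isotropic subspace the Weyl phases $\chi(\omega(u,u')2^{-1})$ are trivial, so $u\mapsto D_u$ is a genuine representation of the abelian group $I$. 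Then $P_I$ is the projector onto its invariant vectors, and its trace is $d^{-(n-k)}\cdot d^n=d^k$ because $\Tr D_u=0$ for $u\neq0$. So $\dim\mathcal{C}_I=d^k$.

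\textbf{The reduced representation.} For $v\in I^\perp$, let $D'_v:=D_v|_{\mathcal{C}_I}$. For $u\in I$ and $v\in I^\perp$ we have $\omega(u,v)=0$, so the Weyl relations give $D_{v+u}=D_vD_u$. Restricted to $\mathcal{C}_I$ this means $D'_{v+u}=D'_v$, so $D'$ depends only on $[v]\in I^\perp/I$. The Weyl relations then pass down to the quotient:
\[
D'_{[v]}D'_{[w]}=\chi\bigl(\tilde\omega([v],[w])2^{-1}\bigr)D'_{[v+w]} .
\]
Thus $([v],t)\mapsto\chi(t)D'_{[v]}$ is a unitary representation of $H(\tilde V)$ on $\mathcal{C}_I$ with central character $\chi$.

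\textbf{Irreducibility and conclusion.} The step requiring care is that this representation is irreducible; I would argue by dimension count rather than directly. The operators $\{D'_{[v]}\}$, with $[v]$ ranging over $d^{2k}$ classes, satisfy $\Tr D'_{[v]}=\Tr(D_vP_I)$. Expanding $D_vP_I$ as a sum of displacements $D_{v+u}$ with $u\in I$ (up to phases), its trace is zero unless $v\in I$. So these operators are orthogonal in the Hilbert–Schmidt inner product, and hence linearly independent. Since there are $d^{2k}=(\dim\mathcal{C}_I)^2$ of them, they span $\operatorname{End}(\mathcal{C}_I)$, and by Schur the representation is irreducible. The discrete Stone-von Neumann theorem, which the paper already invokes, then gives a unitary $U_I:\tilde{\mathcal{H}}\to\mathcal{C}_I$ intertwining $\tilde\rho$ with this representation, which is the intertwining relation in the statement. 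The main point to check is the phase bookkeeping: the stabilizer phases are trivial exactly because $I$ is isotropic, which is also what makes $D'$ well defined on classes.
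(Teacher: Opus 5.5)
Your proof is correct and follows the paper's outline: commutation relations for the preservation claim, descent of $D_v|_{\mathcal{C}_I}$ to $I^\perp/I$ via $D_{v+u}=D_vD_u$, irreducibility, then the discrete Stone--von Neumann theorem. The one genuine difference is how you obtain irreducibility.

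The paper argues abstractly. Every irreducible representation of $H(\tilde V)$ with central character $\chi$ has dimension $d^k$, so by complete reducibility a $d^k$-dimensional representation with that central character must already be irreducible.

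You argue directly. You show the $d^{2k}$ restricted operators are Hilbert--Schmidt orthogonal, using $\Tr(D_vP_I)=0$ for $v\notin I$. This also needs $(D'_{[w]})^{\dagger}D'_{[v]}\propto D'_{[v-w]}$, which you leave implicit but which follows from the quotient Weyl relations. Orthogonality then gives that these operators span $\operatorname{End}(\mathcal{C}_I)$, and irreducibility follows immediately; you don't actually need Schur.

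The paper's route is shorter but leans on the classification of irreducible Heisenberg representations, a standard companion to the Stone--von Neumann theorem. Yours is self-contained given only $\Tr D_w=0$ for $w\neq0$, which both proofs already use to get $\dim\mathcal{C}_I=d^k$. It also exhibits explicitly that the logical displacements form an orthogonal operator basis of the code space.

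You also correctly make explicit that the converse direction needs $\mathcal{C}_I\neq0$. The paper leaves this implicit in its trace computation for $\Pi_I$.
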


\noindent The proof is in the appendix.

\medskip

Therefore, the code space $\mathcal{C}_I$ can equivalently be described as both the subspace of the $n$-qudit Hilbert space picked out by the stabilizer constraints of an $[\![n,k]\!]_d$ stabilizer code, and as the quantization of the reduced classical phase space under a set of linear first-class constraints. Under quantization, the characters of first-class constraints map directly to stabilizers, and characters of gauge-invariant linear observables (labeled by $I^\perp$) give logical displacement operators. For linear observables labeled by elements of $I^\perp$, gauge transformations, given by translations of the labels by elements of $I$, give equivalent observables on the constraint surface and they map to the same operator on the code space.

To summarize, imposing the stabilizer constraints labeled by $I$ produces the same quantum system as quantizing the symplectic reduction $I^\perp/I$. Furthermore, the space $I^\perp$ can be identified with an $[2n,n+k]_d$ classical code with symplectic parity checks $I$. In this sense, quantum codes are quantized codes.

\medskip

Other syndrome sectors are given by nonzero constraints. To see this, start by defining the ``momentum map'' $\mu_I:V\to I^*$ by $\mu_I(u)(v)=f_v(u)$, and for any $r\in I^*$, set $\Sigma_r:=\mu_I^{-1}(r)$. We denote code spaces corresponding to nonzero syndromes $r\in I^*$ by
\begin{equation}
    \mathcal{C}_I^r=\{\Psi\in\mathcal{H}\;|\;\chi(-r(v))D_v\Psi=\Psi\;\forall v\in I\}.
\end{equation}
Then the classical phase space and the quantum Hilbert space decompose as
\begin{equation}
    V=\bigsqcup\limits_{r\in I^*}\Sigma_{r}\quad\text{and}\quad \mathcal{H}=\bigoplus\limits_{r\in I^*}\mathcal{C}_I^r.
\end{equation}
Choose a $w_r\in V$ with $\mu_I(w_r)=r$. Then $\Sigma_r=w_r+I^\perp$ and $D_{w_r}\mathcal{C}_I=\mathcal{C}_I^r$. That is, translation by $w_r$ identifies $\tilde{V}$ with the affine reduced space $\Sigma_r/I$, and $D_{w_r}$ identifies their quantum code spaces. In particular, every syndrome sector has dimension $d^k$, and the corresponding stabilizer group is $\mathcal{S}_I^r=\{\chi(-r(u))D_u\;|\;u\in I\}$. Every stabilizer group can be identified with such a pair $(I,r)$.

\medskip

There is a correspondence between quantum stabilizer codes on $d$-dimensional qudits and certain classical codes over $\mathbb{E}$~\cite{CalderbankSloane1998,Rains1999,KetkarSarvepalli2006}. Its description uses the bijection $\iota_\alpha$ defined in the previous section that identifies $\mathbb{F}^{2n}$ and $\mathbb{E}^n$.

Let $\lambda:=(2\alpha\bar{a})^{-1}\in\mathbb{E}$. Then there is an $\mathbb{F}$-bilinear alternating form on $\mathbb{E}^n$,
\begin{equation}\label{Equation:ModifiedTraceHermitian}
    \langle x,y\rangle_{TA}:=\Tr(\lambda\langle x,y\rangle_H)\in\mathbb{F}.
\end{equation}
This is called the trace-alternating (TA) form on $\mathbb{E}^n$. For an additive code $C\subset\mathbb{E}^n$, its dual code with respect to the inner product Eq.~\eqref{Equation:ModifiedTraceHermitian} is
\begin{equation}\label{Equation:MTHDualCode}
    C^{\perp_{TA}} := \left\{z\in\mathbb E^n\;\big|\;\langle c,z\rangle_{TA}=0\;\forall c\in C\right\}.
\end{equation}
Since $d$ is prime, additive subgroups of $\mathbb{E}^n$ are actually $\mathbb{F}$-linear subspaces.

Through the map $\iota_\alpha$, the standard symplectic form on $\mathbb{F}^{2n}$ is exactly the trace-alternating form on $\mathbb{E}^n$. More precisely, for all $u,v\in\mathbb{F}^{2n}$,
\begin{equation}\label{Equation:SymplecticMTHCorrespondence}
    \langle\iota_\alpha(u),\iota_\alpha(v)\rangle_{TA}=\omega(u,v).
\end{equation}
This follows from a short calculation. For $(q,p),(q',p')\in\mathbb{F}^{2n}$, we have
\begin{align}
    \langle\iota_\alpha(q,p),&\iota_\alpha(q',p')\rangle_H = (q-a\alpha p)\cdot(q'+a\alpha p')\\
    &= q\cdot q'-a^2\alpha^2p\cdot p'+a\alpha(q\cdot p'-p\cdot q').
\end{align}
Since $\lambda+\bar{\lambda}=0$, after multiplying the above expression by $\lambda$ and taking a trace, the first two terms vanish. Then, since $\lambda a\alpha=-2^{-1}$, we are left with
\begin{align}
    &\Tr(\lambda\langle\iota_\alpha(q,p),\iota_\alpha(q',p')\rangle_H)&\\
    &= \Tr\left[-2^{-1}(q\cdot p'-p\cdot q')\right]
    = p\cdot q'-q\cdot p'.\label{Equation:SymplecticMTHCorrespondenceEnd}
\end{align}

Therefore, through $\iota_\alpha$, an additive trace-alternating self-orthogonal code of order $d^{n-k}$ in $\mathbb{E}^n$ corresponds to an $n-k$-dimensional isotropic subspace of $V$, and hence, identifies an $[\![n,k]\!]_d$ stabilizer code. This gives the following established result.

\begin{Theorem*}[\cite{CalderbankSloane1998,Rains1999,KetkarSarvepalli2006}]\label{Theorem:SegalBargmannCodes}
    There is a correspondence between $n-k$-dimensional isotropic subspaces of $V$ and TA-self-orthogonal additive codes in $\mathbb{E}^n$ of order $d^{n-k}$.
\end{Theorem*}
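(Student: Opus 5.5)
The correspondence will be the map $\iota_\alpha$ from Eq.~\eqref{Equation:FiniteComplexification}. It is an $\mathbb{F}$-linear bijection $\mathbb{F}^{2n}\to\mathbb{E}^n$, since $\{1,a\}$ is an $\mathbb{F}$-basis of $\mathbb{E}$ and $\alpha\neq0$. Given an $(n-k)$-dimensional isotropic subspace $I\subset V$, I would send it to $C:=\iota_\alpha(I)\subset\mathbb{E}^n$. In the other direction, a TA-self-orthogonal additive code $C$ is sent to $\iota_\alpha^{-1}(C)$. These two assignments are mutually inverse because $\iota_\alpha$ is a bijection. The task is therefore to check that each assignment lands in the right class.

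First I would show that additivity matches linearity and that the sizes match. Because $d$ is prime, every additive subgroup of $\mathbb{E}^n$ is closed under multiplication by $\mathbb{F}=\mathbb{Z}/d\mathbb{Z}$, so it is an $\mathbb{F}$-subspace. An additive code of order $d^{n-k}$ is then an $\mathbb{F}$-subspace of dimension $n-k$. Since $\iota_\alpha$ is $\mathbb{F}$-linear, it carries $\mathbb{F}$-subspaces of $V$ of dimension $n-k$ bijectively onto additive subgroups of $\mathbb{E}^n$ of order $d^{n-k}$.

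Next I would show that isotropy matches TA-self-orthogonality, using Eq.~\eqref{Equation:SymplecticMTHCorrespondence}: $\langle\iota_\alpha(u),\iota_\alpha(v)\rangle_{TA}=\omega(u,v)$. By this identity, $\omega(u,v)=0$ for all $u,v\in I$ holds exactly when $\langle c,c'\rangle_{TA}=0$ for all $c,c'\in\iota_\alpha(I)$. Moreover, $\iota_\alpha(I^\perp)=\iota_\alpha(I)^{\perp_{TA}}$. Hence $I\subseteq I^\perp$ is equivalent to $C\subseteq C^{\perp_{TA}}$, which is the definition of TA-self-orthogonality.

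The only step needing care is the identity in Eq.~\eqref{Equation:SymplecticMTHCorrespondence}, which the paper already derives. The facts it relies on are $\Tr(\lambda x)=0$ for $x\in\mathbb{F}$ (because $\bar\lambda=-\lambda$) and $\lambda a\alpha=-2^{-1}$. Taking that identity as given, the correspondence follows. If desired, I would also note that the $\mathbb{F}$-linear function $r$ can be carried along to recover the whole stabilizer group.
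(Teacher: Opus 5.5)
Your proposal is correct and follows the paper's argument. The correspondence is $I\mapsto\iota_\alpha(I)$, additive codes are $\mathbb{F}$-subspaces because $d$ is prime, orders match dimensions, and isotropy matches TA-self-orthogonality via the identity $\langle\iota_\alpha(u),\iota_\alpha(v)\rangle_{TA}=\omega(u,v)$; you spell out the bijectivity and cardinality bookkeeping more explicitly than the paper, which just says the correspondence follows immediately from that identity.
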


\noindent With Theorem~\ref{Theorem:MainResult}, this gives the following corollary.

\begin{Corollary}\label{Corollary:SegalBargmannCodes}
    Let $I\subset V$ be an $n-k$-dimensional isotropic subspace with corresponding additive trace-alternating self-orthogonal code $C=\iota_\alpha(I)\subset\mathbb{E}^n$. The map $\iota_\alpha$ induces a symplectic isomorphism
    \begin{equation}
        I^\perp/I\to C^{\perp_{TA}}/C,\quad [v]\mapsto[\iota_\alpha(v)].
    \end{equation}
    Therefore, the code space $\mathcal{C}_I$, together with its logical Heisenberg group is a quantization of $C^{\perp_{TA}}/C$.
\end{Corollary}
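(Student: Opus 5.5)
The corollary has two parts: a symplectic isomorphism of reduced spaces, and a statement about quantization. I would prove the first by linear algebra transported through $\iota_\alpha$, and obtain the second from Theorem~\ref{Theorem:MainResult}.

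\emph{Step 1: $\iota_\alpha$ carries $I^\perp$ onto $C^{\perp_{TA}}$.} The map $\iota_\alpha:\mathbb{F}^{2n}\to\mathbb{E}^n$ is an $\mathbb{F}$-linear bijection. By Eq.~\eqref{Equation:SymplecticMTHCorrespondence}, $\langle\iota_\alpha(u),\iota_\alpha(v)\rangle_{TA}=\omega(u,v)$, so $\iota_\alpha$ is an isometry from $(V,\omega)$ to $(\mathbb{E}^n,\langle\cdot,\cdot\rangle_{TA})$. Take $z=\iota_\alpha(v)$ and $c=\iota_\alpha(u)$ with $u\in I$. Then
\[
\langle c,z\rangle_{TA}=\omega(u,v),
\]
so $z\in C^{\perp_{TA}}$ if and only if $v\in I^\perp$. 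Hence $\iota_\alpha(I^\perp)=C^{\perp_{TA}}$. The same computation shows $C\subseteq C^{\perp_{TA}}$, which is the self-orthogonality used in the Theorem above.

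\emph{Step 2: the induced map is a well-defined symplectic isomorphism.} Since $\iota_\alpha(I)=C$, the linear bijection $\iota_\alpha|_{I^\perp}:I^\perp\to C^{\perp_{TA}}$ descends to a bijection of quotients $[v]\mapsto[\iota_\alpha(v)]$. Well-definedness and injectivity both follow from the fact that $v-v'\in I$ if and only if $\iota_\alpha(v)-\iota_\alpha(v')\in C$.

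Next I equip $C^{\perp_{TA}}/C$ with the induced form $[x],[y]\mapsto\langle x,y\rangle_{TA}$. This is well defined and nondegenerate for the same reason $\tilde\omega$ is on $I^\perp/I$: the radical of the restricted form is exactly $C$. Indeed, the radical of $\omega$ on $I^\perp$ is $I^\perp\cap I^{\perp\perp}=I$, and $\iota_\alpha$ transports it to $C$. The isometry property then gives
\[
\tilde\omega([u],[v])=\langle[\iota_\alpha u],[\iota_\alpha v]\rangle_{TA}.
\]

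\emph{Step 3: quantization statement.} A quantization of a symplectic space depends only on the Heisenberg group $H(\cdot)$ with central character $\chi$. The symplectic isomorphism of Step 2 induces a group isomorphism $H(I^\perp/I)\cong H(C^{\perp_{TA}}/C)$, via $([v],t)\mapsto([\iota_\alpha v],t)$. Therefore any quantization $\tilde{\mathcal{H}}$ of $I^\perp/I$ is also a quantization of $C^{\perp_{TA}}/C$, with $\tilde D_{[\iota_\alpha v]}:=\tilde D_{[v]}$.

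Theorem~\ref{Theorem:MainResult} supplies a unitary $U_I:\tilde{\mathcal{H}}\to\mathcal{C}_I$ intertwining $\tilde D_{[v]}$ with $D_v|_{\mathcal{C}_I}$. This identifies $\mathcal{C}_I$, together with its logical displacement operators $D_v|_{\mathcal{C}_I}$ for $v\in I^\perp$, with a quantization of $C^{\perp_{TA}}/C$.

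\emph{Main obstacle.} The only nontrivial check is that the induced TA form on $C^{\perp_{TA}}/C$ is well defined and nondegenerate, i.e.\ that its radical is exactly $C$. I would not verify this directly over $\mathbb{E}$. Instead I would pull it back through the isometry of Step 1 to the already established statement about $\omega$ on $I^\perp$.
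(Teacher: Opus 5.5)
Your proposal is correct and follows the same route as the paper. It uses the isometry $\langle\iota_\alpha(u),\iota_\alpha(v)\rangle_{TA}=\omega(u,v)$ to get $\iota_\alpha(I^\perp)=C^{\perp_{TA}}$, passes to quotients because $\iota_\alpha$ is an $\mathbb{F}$-linear bijection with $\iota_\alpha(I)=C$, and then invokes Theorem~\ref{Theorem:MainResult}. Your explicit checks (well-definedness, the radical being exactly $C$, and the induced isomorphism of Heisenberg groups) spell out details the paper leaves implicit.
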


\noindent The proof is in the appendix.

\pagebreak

The map $\iota_\alpha$ which identifies $\mathbb{F}^{2n}$ with $\mathbb{E}^n$ that we used in the previous section to define the discrete analogue of the Segal-Bargmann representation is the same structure used to define the correspondence between classical trace-alternating self-orthogonal codes over $\mathbb{E}$ and quantum codes on $d$-dimensional qudits~\cite{CalderbankSloane1998,Rains1999,KetkarSarvepalli2006}. In other words, trace-alternating self-orthogonal additive codes over $\mathbb{E}$ are isotropic subspaces of $V$ written in the Segal-Bargmann coordinates of Eq.~\eqref{Equation:FiniteComplexification}, and under quantization, they also yield stabilizer codes.

The conclusion of Theorem~\ref{Theorem:MainResult} and Corollary~\ref{Corollary:SegalBargmannCodes}, is that stabilizers correspond to the finite-field analogue of a set of first-class constraints, with $I^\perp/I$ the reduced classical phase space, and the logical Pauli group is the Heisenberg group associated to this reduced phase space. The method of constructing quantum codes from trace-alternating self-orthogonal classical codes is obtained by the same process after writing the phase space in the discrete analogue of Segal-Bargmann coordinates Eq.~\eqref{Equation:FiniteComplexification}.

\section*{Discussion}\label{Section:Discussion}

In this paper, we introduced a construction that relates quantum stabilizer codes to quantization of constrained linear systems over finite fields. A certain classical code, given by a set of linear parity checks $I$, identifies a set of classical first-class constraints with constraint surface $I^\perp$ and gauge directions $I$. Under quantization, the classical constraints map to stabilizer constraints, and quantizing the symplectic reduction of the phase space gives the code space and logical operators of the stabilizer code.

In a bit more detail, we introduce a method for constructing representations of discrete Heisenberg groups over finite fields $\mathbb{F}_d$ of odd-prime order $d$. This method closely mirrors the quantization of observables that are linear in the position and momentum variables in geometric quantization of finite-dimensional symplectic vector spaces over the reals.

We then show that this discrete quantization procedure is compatible with imposing certain constraints, in the sense that imposing these constraints and then quantizing is equivalent to quantizing first and then imposing the corresponding quantum constraints. This is the idea behind the slogan \emph{quantization commutes with reduction} from geometric quantization~\cite{GuilleminSternberg1982,TianZhang1998}.

On the classical side, the relevant constraints are discrete analogues of linear first-class constraints (in Dirac's classification~\cite{Dirac1964}), and we point out that these can be identified with the parity checks of a classical linear code over $\mathbb{F}_d$. This identifies the constraint surface with the code itself, and the symplectic reduction is obtained after identifying points that are related by a gauge transformation. Under quantization, the classical constraints (i.e.~parity checks) map to stabilizer constraints of a quantum code, and characters of linear observables on the constraint surface map to logical Pauli operators, hence the conclusion: \emph{quantum codes are quantized codes.}

Finally, using coordinates coming from a polarization that mirrors that of the Segal-Bargmann representation~\cite{Bargmann1961,Segal1963} in the real case, we reproduce a known correspondence between quantum stabilizer codes on $d$-dimensional qudits and classical trace-alternating self-orthogonal codes over $\mathbb{F}_{d^2}$~\cite{Rains1999,KetkarSarvepalli2006}. This provides a more physical interpretation for a previously established quantum-classical correspondence in coding theory.

There are some limitations to these results which leave room for future work. First, we only discuss the case of odd-prime qudit dimension $d$. Some results may extend to odd-prime-power dimensions, but there is some subtlety about whether the resulting Pauli group up to phases is defined to be isomorphic to $\mathbb{F}_d^{2n}$ or $\mathbb{Z}_d^{2n}$~\cite{Gross2006} (for quantum codes, this amounts to the difference between so-called Galois qudits and modular qudits~\cite{AlbertFaist2026}). Some results may also go through for even dimensions, for example, the correspondence between classical and quantum codes also holds for classical codes over $\mathbb{F}_4$ and stabilizer codes on qubits~\cite{CalderbankSloane1998}, though complications arise in representation theory and coding theory in even dimensions and with fields of characteristic $2$~\cite{GurevichHadani2012,RaussendorfFeldmann2023}.

Second, unlike in the real case, in the discrete setting we only discuss observables of a particular type, namely, characters of functions that are linear in the position and momentum variables. This is all that is required to define quantum codes, so it suffices for our purposes. However, these observables form a basis for all complex-valued functions on the finite phase space, and so a complete quantization procedure could be obtained by extending the current procedure by linearity, resulting in a discrete Weyl quantization procedure~\cite{Gross2006}. We leave a more thorough analysis of this procedure for future work.

Lastly, it should be noted that the main purpose of this paper is to provide a physical and intuitive explanation for previously established technical results. For symplectic vector spaces over finite fields of odd-prime characteristic, a canonical quantization functor has been constructed, and it has been shown to be compatible with symplectic reduction~\cite{GurevichHadani2009} (canonical is used here in the mathematical sense, this is not related to what is called ``canonical quantization'' in physics). The correspondence between stabilizer codes on $d$-dimensional qudits and certain additive self-orthogonal codes over $\mathbb{F}_{d^2}$ is also established and well-known~\cite{CalderbankSloane1998,Rains1999,KetkarSarvepalli2006}. The main point of this paper, and my original motivation for writing it, is to provide a more physical interpretation for those mathematical results. To this end, we relate the former to ideas in coding theory, and the latter to quantization of constrained systems. This type of interpretational work is likely to become increasingly important as modern tools make technical mathematical results more quickly and readily available.

\section*{Acknowledgements}

I am grateful to Caroline Barton, Jack Davis, Alex Kazachek, and Sam Winnick for discussions. This work was funded by the Natural Sciences and Engineering Research Council of Canada~(NSERC).

\bibliography{biblio}

\onecolumngrid
\appendix
\section*{Proofs}

\begin{proof}[Proof of Theorem~\ref{Theorem:MainResult}]
    The projector onto the code space $\mathcal{C}_I$ is given by
    \begin{equation}
        \Pi_I=\frac{1}{|I|}\sum\limits_{u\in I}D_u,
    \end{equation}
    with $\Tr(\Pi_I)=d^n/|I|=d^k$. For any $v\in I^\perp$, $D_v$ commutes with every stabilizer, and so it commutes with $\Pi_I$ and preserves the code space $\mathcal{C}_I$. On the other hand, if $v\notin I^\perp$, then there exists a $u\in I$ such that $\omega(u,v)\ne0$, and so $D_v$ maps the code space to a different eigenspace of $D_u$ (i.e., a different syndrome sector), and in particular, $D_v$ does not preserve the code space.

    For any $v\in I^\perp$ and $w\in I$, we have
    \begin{equation}
        D_{v}\big|_{\mathcal{C}_I}=D_{v+w}\big|_{\mathcal{C}_I},
    \end{equation}
    and so $\tilde{R}_{[v]}:=D_v\big|_{\mathcal{C}_I}$ is well-defined for all $[v]\in I^\perp/I$. Furthermore, these operators satisfy
    \begin{equation}
        \tilde{R}_{[u]}\tilde{R}_{[v]}=\chi\left(\tilde{\omega}([u],[v])2^{-1}\right)\tilde{R}_{[u+v]}
    \end{equation}
    for all $[u],[v]\in I^\perp/I$. This is an representation of the Heisenberg group $H(\tilde{V})$ with central character $\chi$. All irreducible representations of $H(\tilde{V})$ with central character $\chi$ have dimension $d^k$, and so by complete reducibility, this representation is irreducible. Therefore, the existence of the intertwiner in Eq.~\eqref{Equation:MainResultIntertwiner} follows from the discrete Stone-von Neumann theorem~\cite{Neuhauser2002}.
\end{proof}

\begin{proof}[Proof of Corollary~\ref{Corollary:SegalBargmannCodes}]
    The correspondence between isotropic subspaces of $V$ and additive TA-self-orthogonal codes in $\mathbb{E}^n$ follows immediately from the Eqs.~\eqref{Equation:SymplecticMTHCorrespondence}--\eqref{Equation:SymplecticMTHCorrespondenceEnd}. With Eq.~\eqref{Equation:MTHDualCode} we also get $\iota_\alpha(I^\perp)=C^{\perp_{TA}}$. Since $\iota_\alpha$ is an $\mathbb F$-linear bijection, it induces the stated symplectic isomorphism of the quotients. Then Corollary~\ref{Corollary:SegalBargmannCodes} follows from Theorem~\ref{Theorem:MainResult}.
\end{proof}

\end{document}